\definecolor{light-gray}{gray}{0.9}
	\newcommand{\eg}{e.g.,\xspace}
	\newtheorem{lemma}{Lemma}%
	\newtheorem{example}{Example}
     \newtheorem{fact}{Fact}%
    \newtheorem*{theorem*}{Theorem}
		\newcommand{\ml}[1][]{\ifthenelse{\equal{#1}{}}{\mathit{ML}}{\mathit{ML}(#1)}}
		\newcommand{\sml}[1][]{\ifthenelse{\equal{#1}{}}{\mathit{SML}}{\mathit{SML}(#1)}}
		\newcommand{\sd}[1][]{\ifthenelse{\equal{#1}{}}{\mathit{SD}}{\mathit{SD}(#1)}}
		\newcommand{\rsd}[1][]{\ifthenelse{\equal{#1}{}}{\mathit{RSD}}{\mathit{RSD}(#1)}}
		\newcommand{\st}[1][]{\ifthenelse{\equal{#1}{}}{\mathit{ST}}{\mathit{ST}(#1)}}
		\newcommand{\bd}[1][]{\ifthenelse{\equal{#1}{}}{\mathit{BD}}{\mathit{BD}(#1)}}
		\newcommand{\pc}[1][]{\ifthenelse{\equal{#1}{}}{\mathit{PC}}{\mathit{PC}(#1)}}
		\newcommand{\dl}[1][]{\ifthenelse{\equal{#1}{}}{\mathit{DL}}{\mathit{DL}(#1)}}
		\newcommand{\ul}[1][]{\ifthenelse{\equal{#1}{}}{\mathit{UL}}{\mathit{UL}(#1)}}
		\newcommand{\serdict}[1][]{\ifthenelse{\equal{#1}{}}{\sigma}{\sigma(#1)}}
	\newcommand\eat[1]{}
	\newlength{\wordlength}
	\newcommand{\set}[1]{\{#1\}}
	\newcommand{\midd}{\mathbin{:}}
	\newcommand{\eqclass}[2][]{\ifthenelse{\equal{#1}{}}{[#2]}{[#2]_{\sim_{#1}}}}
	\newcommand{\Pref}[1][]{
		\ifthenelse{\equal{#1}{}}{\mathrel R}{\mathop{R_{#1}}}
	}                                          
	\newcommand{\sPref}[1][]{                  
		\ifthenelse{\equal{#1}{}}{\mathrel P}{\mathop{P_{#1}}}
	}                                          
	\newcommand{\Indiff}[1][]{                 
		\ifthenelse{\equal{#1}{}}{\mathrel I}{\mathop{I_{#1}}}
	}
	\newcommand{\prefset}[1][]{\ifthenelse{\equal{#1}{}}{\mathcal{R}}{\mathcal{R}_{#1}}}
\begin{document}

\title{A Comment on the Averseness of Random Serial Dictatorship to Stochastic Dominance Efficiency}

	 \author{Haris Aziz} \ead{haris.aziz@data61.csiro.au}

	\address{Data61 and UNSW} 
	


	\begin{abstract}
        Random Serial Dictatorship (RSD) is arguably the most well-known and widely used assignment rule. Although it returns an ex post efficient assignment, Bogomolnaia and Moulin (A new solution to the random assignment problem, J. Econ. Theory 100, 295--328) proved that RSD may not be SD-efficient (efficient with respect stochastic dominance). The result raises the following question: under what conditions is RSD not SD-efficient? In this comment, we give a detailed argument   that the RSD assignment is not SD-efficient if and only if an ex post assignment exists that is not SD-efficient. Hence RSD can be viewed as being inherently averse to SD-efficiency. The characterization was proved by Manea (2009).
\end{abstract}

	\begin{keyword}
			\emph{JEL}: C63, C70, C71, and C78
	\end{keyword}

\maketitle

\section{Introduction}

Consider the assignment problem in which $n$ agents expresses linear orders over $n$ objects and each agent is to be allocated one object~\citep{AbSo99a,ACMM05a,AMXY15a,BoMo01a,Gard73b, Sven94a,Sven99a}. 
 The most famous mechanism for the problem is \emph{random serial dictatorship (RSD)}~\citep{ABB13b, BoMo01a}. In RSD,  a permutation of the agents is chosen uniformly at random and then agents in the permutation are given the most preferred object that is still not allocated.
The reason RSD is a compelling mechanism for the assignment  problem is because it is strategyproof  and also ex post efficient (the outcome can be represented as convex combination of deterministic Pareto optimal outcomes). In fact, it has even been conjectured that RSD is the only mechanism that satisfies anonymity, strategyproofness and ex post efficiency~\citep[see \eg][]{LeSe11a}.

Although RSD is a desirable mechanism, \citet{BoMo01a} showed that RSD is not SD-efficient (efficiency with respect to stochastic dominance).\footnote{\citet{BoMo01a} used the term ordinal efficiency for SD-efficient. \citet{BoMo01a} also presented the probabilistic serial mechanism that is SD-efficient. However, the mechanism is not stratefyproof. } 
The observation was surprising because SD-efficiency is a very undemanding property. 
To highlight this point, note that if an assignment is not SD-efficient, then there exists another assignment in which for all cardinal utilities consistent with the ordinal preferences, all agents get at least as much utility and one agent gets strictly more utility. 
In this note, we explore the lack of SD-efficiency of RSD and present an elemetary and detailed argument for the following theorem.


    \begin{theorem*}[Proposition 3 of \citet{Mane09a}]
For a given preference profile, the RSD assignment is not SD-efficient if and only if there exists an ex post assignment that is not SD-efficient.
        \end{theorem*}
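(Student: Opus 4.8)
The plan is to prove the two implications separately, with essentially all of the work in the ``only if'' direction. Write $P$ for the RSD assignment, and recall two elementary facts. A deterministic assignment is SD-efficient exactly when it is Pareto optimal (admits no improving trading cycle), because dominating a degenerate lottery in the stochastic order forces all probability onto weakly preferred objects. And an \emph{ex post assignment} -- a deterministic assignment appearing with positive probability in some decomposition of $P$ -- is the same thing as a perfect matching $M$ contained in the support $\mathrm{supp}(P)=\{(i,o):P_{io}>0\}$: any decomposition keeps its matchings inside $\mathrm{supp}(P)$, and applying Birkhoff--von Neumann to $P$ shows that every pair of $\mathrm{supp}(P)$ is used by some ex post assignment.

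For the ``if'' direction, take an ex post assignment $M$ that is not SD-efficient, hence not Pareto optimal, and extract from a minimal Pareto improvement an improving trading cycle of distinct agents $j_1,\dots,j_p$ with the objects $o_s:=M(j_s)$ distinct and $o_{s+1}\succ_{j_s}o_s$ for all $s$ (indices mod $p$). Since $M$ is an ex post assignment, $P_{j_so_s}>0$. For small $\varepsilon>0$, form $Q$ from $P$ by moving, for each $s$, probability $\varepsilon$ of agent $j_s$ from $o_s$ to $o_{s+1}$: the shifts cancel around the cycle so $Q$ is again a random assignment, it is feasible because $P_{j_so_s}\ge\varepsilon$, every $j_s$ becomes strictly better in the stochastic-dominance order, and no one else is affected. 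Thus $Q$ SD-dominates $P$, so RSD is not SD-efficient.

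For the ``only if'' direction, fix a random assignment $Q$ SD-dominating $P$ and set $D=Q-P$, which has zero row and column sums and is nonzero. Using that each $Q_i$ first-order stochastically dominates $P_i$, write $D_i$ as a sum of ``upward shifts'' $\varepsilon(\mathbf e_y-\mathbf e_z)$ with $y\succ_i z$, arranged so that the total mass drawn from any object $o$ is at most $P_{io}$, and record each shift as a directed edge $z\to y$ on the object set, labelled by $i$. Zero column sums means in-flow equals out-flow at every object, so this multigraph decomposes into directed cycles; taking one of minimal length and using transitivity to remove repeated vertices and repeated labels (if one agent labels two edges, the better head strictly beats the worse tail, giving a chord and a shorter cycle), I obtain distinct objects $o_1,\dots,o_p$ and distinct agents $j_1,\dots,j_p$ with $o_1\to\cdots\to o_p\to o_1$ labelled $j_1,\dots,j_p$, so that $o_{s+1}\succ_{j_s}o_s$ and $P_{j_so_s}>0$ for every $s$.

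The decisive step -- and the one I expect to be the main obstacle -- is to show that the partial matching $\{(j_1,o_1),\dots,(j_p,o_p)\}$ extends to a perfect matching $M\subseteq\mathrm{supp}(P)$, that is, to an ex post assignment; this is exactly where the structure of serial dictatorship must be exploited, since the support of an arbitrary random assignment need not admit such an extension. Given the extension, $M$ is an ex post assignment and rotating along the cycle (reassigning $j_s$ the object $o_{s+1}=M(j_{s+1})$) strictly improves every $j_s$ and changes no one else, so $M$ is not Pareto optimal, hence not SD-efficient, as wanted. I would prove the extension via Hall's condition on the support graph after deleting the agents $j_s$ and the objects $o_s$: a deficient set $S$ of agents would force $|N(S)|<|S|+q$, where $q$ counts the cycle objects adjacent to $S$; but under RSD every agent of $S$ always receives an object of $N(S)$, so in every serial-dictatorship outcome at least one of those $q$ cycle objects goes to $S$, which already contradicts $P_{j_so_s}>0$ when $q=1$, and for $q\ge 2$ is ruled out by combining the minimality of the cycle with the description of which objects an agent can obtain across serial-dictatorship orders.
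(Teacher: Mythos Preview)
You have misread the statement. In this paper (and in Manea's), ``ex post assignment'' is shorthand for \emph{ex post efficient random assignment}---a random assignment expressible as a convex combination of Pareto optimal deterministic assignments---not for a deterministic matching lying in $\mathrm{supp}(P)$. This is explicit in the paper's proof of the theorem: ``assume \dots every ex post assignment is SD-efficient. Since the RSD assignment is ex post efficient, it follows that the RSD assignment is SD-efficient.'' Under the intended reading the two directions swap difficulty entirely: the ``only if'' direction is \emph{trivial} (RSD is itself ex post efficient, so if it fails SD-efficiency it is the witness), and all the content lies in the ``if'' direction.

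For that direction the paper's argument is short and avoids everything you attempt with Hall's condition. Suppose some ex post efficient $s$ is not SD-efficient. By the Bogomolnaia--Moulin characterization, $s$ admits a trading cycle $o_0,i_0,\dots,o_{k-1},i_{k-1},o_0$ with $s(i_j)(o_j)>0$. Since $s$ is a convex combination of Pareto optimal discrete assignments, for each $j$ there is some Pareto optimal discrete $r^{(j)}$ with $r^{(j)}(i_j)(o_j)=1$. Now the key structural fact (Abdulkadiro\u{g}lu--S\"onmez): \emph{every} Pareto optimal discrete assignment is a serial dictatorship outcome, so RSD is a \emph{proper} convex combination of \emph{all} Pareto optimal discrete assignments. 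Hence each $r^{(j)}$ appears with positive weight in RSD, giving $P(i_j)(o_j)>0$ for every $j$. The same cycle is therefore a trading cycle for $P$, and RSD is not SD-efficient.

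Your elaborate ``only if'' argument---decomposing $D=Q-P$ into upward shifts, extracting a minimal cycle, and then extending the partial matching $\{(j_s,o_s)\}$ to a perfect matching inside $\mathrm{supp}(P)$---is aimed at a different (and harder) statement, and in any case the extension step is left as a sketch: the $q\ge 2$ case of your Hall argument (``ruled out by combining the minimality of the cycle with the description of which objects an agent can obtain across serial-dictatorship orders'') is not an argument but a hope. None of this is needed once the statement is read correctly.
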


        The theorem highlights the fact not only is RSD SD-\emph{inefficient} in general but is never SD--efficient if there some ex post efficient assignment is not SD-efficient. 
        

%

\section{Preliminaries}

The model we consider is the \emph{random assignment problem}~\citep{BoMo01a} which is a triple $(N,O,\succ)$ where $N$ is the set of $n$ agents $\{1,\ldots, n\}$, $O=\{o_1,\ldots, o_n\}$ is the set of objects, and $\succ=(\succ_1,\ldots,\succ_n)$ specifies complete, anti-symmetric and transitive preferences $\succ_i$ of agent $i$ over $O$. 
We will denote by $\mathcal{R}(O)$ as the set of all complete and transitive relations over the set of objects $O$.

A random assignment $p$ is a $n\times n$ matrix $[p(i)(o_j)]_{1\leq i\leq n, 1\leq j\leq n}$ such that for all $i\in N$, and $o_j\in O$, $ p(i)(o_j) \in [0,1]$; $\sum_{i\in N}p(i)(o_j)= 1$ for all $j\in \{1,\ldots, n\}$; and $\sum_{o_j\in O}p(i)(o_j)= 1$ for all $i\in N$. 
The value $p(i)(o_j)$ represents the probability of object $o_j$ being allocated to  agent $i$. Each row $p(i)=(p(i)(o_1),\ldots, p(i)(o_n))$ represents the allocation of agent $i$. 
The set of columns correspond to probability vectors of the objects $o_1,\ldots, o_n$.
A feasible random assignment is \emph{discrete} if $p(i)(o)\in \{0,1\}$ for all $i\in N$ and $o\in O$. A discrete Pareto optimal assignment $p$ is \emph{Pareto optimal} if there does not exists another Pareto optimal assignment $q$ such that each agent gets the same or more preferred object in $q$ and at least one agent gets a more preferred object in $q$.

In order to reason about preferences over random allocations, we extend preferences over objects to preferences over random allocations. One standard extension is \emph{SD (stochastic dominance)} 
Given two random assignments $p$ and $q$, it holds that $p(i) \succsim_i^{\sd} q(i)$ i.e.,  a player $i$ \emph{$\sd$~prefers} allocation $p(i)$ to allocation $q(i)$ if for all $o\in O$.
	\[	\sum_{o_j\in \set{o_k\midd o_k\succsim_i o}}p(i)(o_j) \ge \sum_{o_j\in \set{o_k\midd o_k\succsim_i o}}q{(i)(o_j)}.\] 

	An assignment $p$ is \emph{$\sd$-efficient} is there exists no assignment $q$ such that $q(i) \succsim_i^{\sd} p(i)$ for all $i\in N$ and $q(i) \succ_i^{\sd} p(i)$ for some $i\in N$. An assignment is \emph{ex post efficient} if it can be represented as a probability distribution over the set of $\sd$-efficient discrete assignments.

Serial dictatorship (also called priority) is defined as follows. 
Each agent in permutation $\pi$ of $N$ gets a turn according to the permutation. When an agent's turn comes, the agent is allocated the most preferred object that is not yet allocated. The outcome of serial dictatorship is a deterministic assignment. We will refer to the outcome of serial dictatorship with respect to permutation $\pi$ as 
$\mathrm{Prio}(N,A,\succ,\pi)$.

RSD is a random assignment rule in which a permutation of agents is chosen uniformly at random and then serial dictatorship is run with respect to the permutation~\citep{AbSo98a,BoMo01a,ABB13b}:

\begin{equation*}
	\textit{RSD}(N,O,\succ)=\sum_{\pi\in \Pi^N}
	\frac{1}{n!}(\mathrm{Prio}(N,A,\succ,\pi))
\end{equation*}
where $\Pi^N$ denote the
set of all permutations of $N$.

 We say a random assignment $p$ is a \emph{proper} convex combination of a set of assignments $B$ if $p$ can be represented as a convex combination of assignments in $B$ such that the weight of each assignment in $B$ is non-zero. Note that the RSD assignment is a proper convex combination of the serial dictatorship outcomes.

 \begin{example}
 	Consider an assignment problem in which $N=\{1,2,3,4\}$, $O=\{o_1,o_2,o_3,o_4\}$ and the preferences $\succ$ are as follows.\footnote{The profile in this example is the same one that was used by \citet{BoMo01a} to show that the RSD outcome is not SD-efficient.}
 		\begin{align*}
 	1:&\quad o_1,o_2,o_3,o_4&
    3:&\quad o_2,o_1,o_4,o_3\\
 	2:&\quad o_1,o_2,o_3,o_4&
  	4:&\quad o_2,o_1,o_4,o_3
 	\end{align*}
 \[\mathrm{Prio}(N,O,\succ,1234)=\begin{pmatrix}
    	1&0&0&0\\
     	0&1&0&0\\
    	0&0&0&1\\
     	0&0&1&0\\
    	\end{pmatrix}, ~~ RSD(N,O,\succ)=\begin{pmatrix}
 	\nicefrac{5}{12}&\nicefrac{1}{12}&\nicefrac{5}{12}&\nicefrac{1}{12}\\
 \nicefrac{5}{12}&\nicefrac{1}{12}&\nicefrac{5}{12}&\nicefrac{1}{12}\\
 \nicefrac{1}{12}&\nicefrac{5}{12}&\nicefrac{1}{12}&\nicefrac{5}{12}\\
 \nicefrac{1}{12}&\nicefrac{5}{12}&\nicefrac{1}{12}&\nicefrac{5}{12}
 	\end{pmatrix}.\]

 In the RSD assignment, the probability of agent $1$ getting $o_1$ is $5/12$.
 \end{example}

         %
         %
         %
         %

\section{Inefficiency of RSD}

    Before we proceed, we present a characterization of SD-efficiency~ \citep[Lemma 3, ][]{BoMo01a}. An assignment $p$ admits a \emph{trading cycle} $o_0,i_0,o_1,i_1,\ldots, o_{k-1},i_{k-1},o_0$ in which $p(i_j)(o_j)>0$ for all $j\in \{0,\ldots, k-1\}$, $o_{j+1 \mod k} \succ_j o_{j\mod k}$ for all $j\in \{0,\ldots, k-1\}$.
 \citet{BoMo01a} proved that an assignment is SD-efficient if and only if it does not admit a trading cycle.

 \begin{fact}[\citet{BoMo01a}]\label{fact:sdeff}
     An assignment is SD-efficient if and only if  it does not admit a trading cycle.
     \end{fact}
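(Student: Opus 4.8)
The plan is to prove the equivalence by establishing the two contrapositives, which behave quite differently.

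\textbf{(1) If $p$ admits a trading cycle, then $p$ is not $\sd$-efficient.} Suppose $p$ admits a trading cycle; by ``short-cutting'' any object that occurs twice I may assume the objects $o_0,\dots,o_{k-1}$ along the cycle are pairwise distinct. Put $\varepsilon=\min_{0\le j<k}p(i_j)(o_j)>0$ and define $q$ from $p$ by shifting $\varepsilon$ units of probability mass around the cycle: for each $j$, decrease $p(i_j)(o_j)$ by $\varepsilon$ and increase $p(i_j)(o_{j+1\bmod k})$ by $\varepsilon$, leaving every other entry unchanged. First I would verify that $q$ is a feasible random assignment: nonnegativity follows from the choice of $\varepsilon$ and the distinctness of the $o_j$, and every row sum and every column sum is preserved because each transfer moves $\varepsilon$ between two cells lying in a common row and in a common column. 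Then I would invoke the elementary fact that shifting probability mass from an object to a $\succ_i$-preferred object is a strict $\sd$-improvement for agent $i$: for any threshold $o$, the shifted mass contributes to $\sum_{o'\succsim_i o}$ of the new allocation whenever it contributed to that of the old one, and for at least one threshold it contributes strictly more. Since agents not on the cycle satisfy $q(i)=p(i)$, and an agent that recurs on the cycle merely performs several such upward shifts (whose composition is again an $\sd$-improvement), we get $q(i)\succsim_i^{\sd}p(i)$ for all $i$ with strict improvement for $i_0$; hence $p$ is not $\sd$-efficient.

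\textbf{(2) If $p$ is not $\sd$-efficient, then $p$ admits a trading cycle.} Fix $q$ with $q(i)\succsim_i^{\sd}p(i)$ for all $i$ and $q(i_0)\succ_{i_0}^{\sd}p(i_0)$ for some $i_0$. The workhorse is the lemma $(\star)$: if $q(i)$ $\sd$-dominates $p(i)$ and $q(i)(o^-)<p(i)(o^-)$ for some object $o^-$, then there is an object $o^+\succ_i o^-$ with $q(i)(o^+)>p(i)(o^+)$. To see $(\star)$, rank $i$'s objects $a_1\succ_i\cdots\succ_i a_n$, set $\delta_t=\sum_{l\le t}\bigl(q(i)(a_l)-p(i)(a_l)\bigr)$, and use $\delta_t\ge 0$, $\delta_n=0$, together with $\delta_m<\delta_{m-1}$ (where $o^-=a_m$) to locate $t\le m-1$ with $q(i)(a_t)>p(i)(a_t)$. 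I would then form the directed graph $H$ on the object set with an arc $o'\to o$, labelled by agent $i$, precisely when $q(i)(o')<p(i)(o')$, $q(i)(o)>p(i)(o)$, and $o\succ_i o'$. Two properties of $H$ finish the proof: (i) $H$ has at least one arc, because $q(i_0)\neq p(i_0)$ forces some $o^-$ with $q(i_0)(o^-)<p(i_0)(o^-)$, and $(\star)$ then supplies the head of an arc leaving $o^-$; and (ii) every object with an incoming arc has an outgoing arc, because an arc into $o$ gives $q(i)(o)>p(i)(o)$ for some $i$, so column conservation $\sum_i q(i)(o)=\sum_i p(i)(o)=1$ yields $i'$ with $q(i')(o)<p(i')(o)$, and $(\star)$ applied to $i'$ at $o^-=o$ produces an arc leaving $o$. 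Since $H$ is finite, (i) and (ii) force a simple directed cycle $o_0\to o_1\to\cdots\to o_{k-1}\to o_0$; the agents $i_0,\dots,i_{k-1}$ labelling its arcs then constitute a trading cycle of $p$, since each arc guarantees $p(i_j)(o_j)>q(i_j)(o_j)\ge 0$ and $o_{j+1\bmod k}\succ_{i_j}o_{j\bmod k}$.

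I expect implication (2) to be the real obstacle: the one nonroutine idea is the auxiliary graph $H$, and in particular recognizing that ``column conservation plus $(\star)$'' is exactly what propagates an arc forward, so that local improvements assemble into a genuine cycle rather than a forest of unrelated transfers. Implication (1), by contrast, is bookkeeping; its only delicate point is agents or objects repeating along the cycle, which the preliminary reduction to a cycle with pairwise distinct objects neutralizes.
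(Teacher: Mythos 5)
Your argument is correct and complete in both directions: the cycle-trading construction (with the reduction to pairwise distinct objects guaranteeing feasibility of the $\varepsilon$-shift) for one implication, and the lemma $(\star)$ plus the auxiliary object graph with column conservation for the other. Note that the paper itself does not prove this statement — it is stated as a fact and attributed to \citet{BoMo01a} (their Lemma 3) — and your proof is essentially the standard Bogomolnaia--Moulin argument for that lemma, so you have supplied a correct self-contained proof of the cited result rather than a divergent route.
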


We will also use the following characterization of Pareto optimal discrete assignments~\citep{AbSo98a}. Fact~\ref{fact:AbSo} also follows from Proposition 1 by \citet{BrKi05a}.

 \begin{fact}[\citet{AbSo98a}]\label{fact:AbSo}
     A discrete assignment is Pareto optimal if and only if it is an outcome of serial dictatorship.
     \end{fact}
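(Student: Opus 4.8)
The statement is an equivalence, so I would prove the two implications separately; the direction ``serial dictatorship outcome $\Rightarrow$ Pareto optimal'' is routine, while ``Pareto optimal $\Rightarrow$ serial dictatorship outcome'' carries essentially all the content. For the easy direction, fix a permutation $\pi=(a_1,\ldots,a_n)$ and let $p=\mathrm{Prio}(N,O,\succ,\pi)$. Suppose some discrete assignment $q$ weakly Pareto dominates $p$, i.e.\ $q(a_t)\succsim_{a_t}p(a_t)$ for all $t$ with strictness somewhere. I would show by induction along the order $\pi$ that $q(a_t)=p(a_t)$ for every $t$, which contradicts strictness. The base case uses that $p(a_1)$ is $a_1$'s globally most preferred object, so nothing is strictly better and weak dominance forces equality; the inductive step uses that, once $a_1,\ldots,a_{t-1}$ hold their $p$-objects in $q$, agent $a_t$ in $q$ must pick from $O\setminus\{p(a_1),\ldots,p(a_{t-1})\}$, and $p(a_t)$ is $a_t$'s favourite object in exactly that residual set, so again weak dominance forces $q(a_t)=p(a_t)$.

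For the hard direction I would first isolate the crux as a lemma: \emph{every Pareto optimal discrete assignment allocates at least one agent its most preferred object.} The plan is to build the ``top-object'' digraph on $N$, drawing an arc from each agent $i$ to the unique agent holding $i$'s favourite object under $p$ (well defined since $\succ_i$ is a linear order and $p$ is a perfect matching). Every vertex has out-degree one, so the digraph contains a directed cycle. If that cycle had length at least two, no agent on it would currently hold its own top object, and rotating objects around the cycle (each such agent taking the object it points to) would make every agent on the cycle strictly better off while leaving all others unchanged --- a Pareto improvement, contradicting optimality. Hence the only cycles are self-loops, and a self-loop at $i$ means $i$ holds its top object. (I would remark that this rotation is precisely a trading cycle in the sense of \factref{fact:sdeff}, so the argument is morally the discrete specialization of that characterization.)

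With the lemma in hand I would finish by induction on $n$. Pick an agent $i_1$ receiving its most preferred object and place it first in the permutation. Restricting attention to $N\setminus\{i_1\}$, $O\setminus\{p(i_1)\}$, and the induced preferences, I claim the restriction of $p$ is Pareto optimal in this reduced instance: any Pareto improvement there could be extended by returning $p(i_1)$ (still $i_1$'s global top) to $i_1$, yielding a Pareto improvement of $p$, a contradiction. By the induction hypothesis the reduced assignment equals $\mathrm{Prio}$ for some permutation $\pi'$ of $N\setminus\{i_1\}$, and I would verify that $\pi=(i_1,\pi')$ reproduces $p$: serial dictatorship lets $i_1$ seize its global favourite $p(i_1)$ first, after which the remaining agents face exactly the reduced instance and replay $\pi'$.

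The main obstacle is the top-object lemma and its rotation argument, since everything downstream is a clean peel-off induction; the one subtlety worth stating carefully is that Pareto optimality genuinely descends to the reduced instance, which relies on $p(i_1)$ remaining $i_1$'s top choice after deletion.
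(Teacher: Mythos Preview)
Your argument is correct and is essentially the standard proof of this characterization. However, the paper does not prove \factref{fact:AbSo} at all: it is stated as a cited result from \citet{AbSo98a} (with a pointer also to Proposition~1 of \citet{BrKi05a}) and used as a black box in the proof of \lemref{lemma:rsd-convex}. So there is nothing in the paper to compare your proof against; you have supplied a self-contained argument where the paper simply imports the fact from the literature.
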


     We first present a simple lemma. 
     
     \begin{lemma}\label{lemma:convex-non-zero}
         Consider any assignment $p$ that is a proper convex combination of all discrete Pareto optimal assignments. Then $p(i)(o)>0$ if there exists some Pareto optimal discrete assignment $q$ such that $q(i)(o)>0$.
         \end{lemma}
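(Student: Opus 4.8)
The plan is simply to unfold the definition of a \emph{proper} convex combination and to use that assignment matrices have nonnegative entries. Let $B$ denote the (finite) set of all discrete Pareto optimal assignments. Since $p$ is a proper convex combination of $B$, I would fix a representation $p=\sum_{r\in B}\lambda_r\, r$ with $\sum_{r\in B}\lambda_r=1$ and, crucially, $\lambda_r>0$ for \emph{every} $r\in B$; the word ``proper'' is precisely what guarantees that no weight vanishes.

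Now fix an agent $i$, an object $o$, and a discrete Pareto optimal assignment $q$ with $q(i)(o)>0$. Because $q$ is discrete, $q(i)(o)\in\{0,1\}$, hence in fact $q(i)(o)=1$. Reading off the $(i,o)$ entry of the convex combination gives $p(i)(o)=\sum_{r\in B}\lambda_r\, r(i)(o)$. Every term of this sum is nonnegative (each $\lambda_r\ge 0$ and each $r(i)(o)\ge 0$), and $q\in B$, so we may keep only the $q$-term to obtain $p(i)(o)\ge \lambda_q\, q(i)(o)=\lambda_q>0$, which is exactly the claimed strict inequality.

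There is essentially no obstacle here: the statement is close to a tautological consequence of the definition of ``proper'', and it does not even require Fact~\ref{fact:AbSo}. The only points one must not skip are (a) recalling that ``proper'' forces all weights to be strictly positive, and (b) noting that since all entries of the assignments in $B$ are nonnegative, discarding every summand except the $q$-term can only decrease the sum, which is what lets a single positive contribution force $p(i)(o)>0$.
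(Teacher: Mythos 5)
Your proposal is correct and follows essentially the same route as the paper: since every discrete Pareto optimal assignment, in particular $q$, receives strictly positive weight in the proper convex combination, the $(i,o)$ entry of $p$ is bounded below by $\lambda_q\, q(i)(o)>0$. Your write-up is in fact more carefully spelled out than the paper's (which states the same idea very tersely), but there is no difference in substance.
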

         \begin{proof}
             Assume that there exists some Pareto optimal discrete assignment $q$ such that $q(i)(o)>0$.  Since $q$ has non-zero probability when $r$ is expressed as a convex combination of discrete Pareto optimal assignments, it follows that $q(i)(o)>0$.
             \end{proof}
     
     Next, we rely on Lemma~\ref{lemma:convex-non-zero} and Fact~\ref{fact:sdeff} to prove the following. 

\begin{lemma}\label{lemma:convex-expost}
    For a given preference profile, if there exists a random assignment that is ex post efficient but not SD-efficient, then a proper convex combination of all Pareto optimal deterministic assignments is such a random assignment as well.
    \end{lemma}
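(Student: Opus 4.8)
The plan is to take an arbitrary witness $p$ --- a random assignment that is ex post efficient but not SD-efficient --- and transfer its defect to \emph{any} assignment $r$ obtained as a proper convex combination of \emph{all} Pareto optimal deterministic assignments. Such an $r$ is well defined and is a genuine random assignment, being a convex combination of permutation matrices, and the set of Pareto optimal discrete assignments is nonempty since serial dictatorship always produces one (Fact~\ref{fact:AbSo}). Since $r$ is by construction a probability distribution over Pareto optimal discrete assignments, and for discrete assignments SD-efficiency coincides with ordinary Pareto optimality (an SD-comparison between $0$--$1$ allocation vectors reduces to the comparison of the single object received), $r$ is ex post efficient. Hence the only thing left to establish is that $r$ is not SD-efficient.

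By Fact~\ref{fact:sdeff}, $p$ failing to be SD-efficient means $p$ admits a trading cycle $o_0,i_0,o_1,i_1,\ldots,o_{k-1},i_{k-1},o_0$ with $p(i_j)(o_j)>0$ for all $j$ together with the attendant strict-preference conditions along the cycle. I would argue that this very same cycle is a trading cycle of $r$. The strict-preference conditions depend only on the profile $\succ$ and are untouched by replacing $p$ with $r$, so it suffices to verify the positivity conditions $r(i_j)(o_j)>0$ for each $j$.

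Fix $j$. Since $p$ is ex post efficient, write $p=\sum_t \lambda_t q_t$ with all $\lambda_t>0$ and each $q_t$ a Pareto optimal discrete assignment; then $p(i_j)(o_j)=\sum_t\lambda_t q_t(i_j)(o_j)>0$ forces $q_t(i_j)(o_j)=1$ for some $t$. Thus there exists a Pareto optimal discrete assignment that allocates $o_j$ to $i_j$, and Lemma~\ref{lemma:convex-non-zero}, applied to $r$ (a proper convex combination of all Pareto optimal discrete assignments), gives $r(i_j)(o_j)>0$. Ranging over $j$, the assignment $r$ admits the trading cycle, so by Fact~\ref{fact:sdeff} it is not SD-efficient, which completes the proof.

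The argument is short, and its only real content is Lemma~\ref{lemma:convex-non-zero}; I expect the main (and quite minor) subtlety to be the bookkeeping around the two efficiency notions --- namely, making explicit that ``ex post efficient'' supplies a decomposition of $p$ into Pareto optimal discrete assignments, and that $r$, being precisely such a convex combination, is itself ex post efficient --- rather than anything in the trading-cycle transfer, which is essentially immediate once positivity is in hand.
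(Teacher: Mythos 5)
Your proposal is correct and follows essentially the same route as the paper: extract the trading cycle of the ex post efficient but SD-inefficient witness via Fact~\ref{fact:sdeff}, use its decomposition into Pareto optimal discrete assignments to find, for each cycle entry, a discrete assignment with that entry positive, and then invoke Lemma~\ref{lemma:convex-non-zero} to transfer positivity (hence the cycle) to the proper convex combination of all Pareto optimal deterministic assignments. The only difference is that you also explicitly verify that this combination is itself ex post efficient, a point the paper's proof leaves implicit.
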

    \begin{proof}
        Assume that there  exists some assignment $s$ that is ex post but not SD efficient. Consider any assignment $p$ that is a proper convex combination of all discrete Pareto optimal assignments. We will show that $p$ is not SD-efficient. 
        
        Since $s$ is ex post efficient, it can be represented by a convex combination of Pareto optimal discrete assignments. Let the set be $B$. Since $s$ is not SD-efficient, by Fact~\ref{fact:sdeff}, it admits a trading cycle $o_0,i_0,o_1,i_1,\ldots, o_{k-1},i_{k-1},o_0$ in which $s(i_j)(o_j)>0$ for all $j\in \{0,\ldots, k-1\}$, $o_{j+1 \mod k} \succ_j o_{j\mod k}$ for all $j\in \{0,\ldots, k-1\}$. Since $s$ is ex post efficient, it can be represented as a convex combination of Pareto optimal assignments. Therefore if 
        $s(i_j)(o_j)>0$, then there exist some discrete Pareto optimal assignment $r$ such that $r(i_j)(o_j)>0$. As $p$ is proper convex combination of all discrete Pareto optimal assignments, if there exists a discrete assignment $r$ such that $r(i_j)(o_j)>0$, then by Lemma~\ref{lemma:convex-non-zero}, $p(i_j)(o_j)>0$ for all $j\in \{0,\ldots, k-1\}$.
By this argument, it follows that $p$ admits a trading cycle  $o_0,i_0,o_1,i_1,\ldots, o_{k-1},i_{k-1},o_0$ in which $p(i_j)(o_j)>0$ for all $j\in \{0,\ldots, k-1\}$, $o_{j+1 \mod k} \succ_j o_{j\mod k}$ for all $j\in \{0,\ldots, k-1\}$.  Hence $p$ is not SD-efficient.   
        \end{proof}
        
        Next, we use  Fact~\ref{fact:AbSo} to prove the following. 
        
        \begin{lemma}\label{lemma:rsd-convex}
            The outcome of RSD is a proper convex combination of all Pareto optimal deterministic assignments.
            \end{lemma}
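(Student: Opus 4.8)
The plan is to prove the two inclusions needed to show that the RSD outcome, viewed as a point in the simplex over discrete assignments, is a proper convex combination of \emph{all} Pareto optimal deterministic assignments. By definition, $\textit{RSD}(N,O,\succ)$ is the average, with weight $1/n!$ each, of the outcomes $\mathrm{Prio}(N,O,\succ,\pi)$ over all permutations $\pi\in\Pi^N$. Each such outcome is a discrete assignment, and by Fact~\ref{fact:AbSo} each is Pareto optimal. So RSD is certainly \emph{a} convex combination of Pareto optimal discrete assignments, with every coefficient being a positive multiple of $1/n!$. The only thing left to check is that no Pareto optimal discrete assignment is \emph{missing} from this combination --- i.e., that every Pareto optimal deterministic assignment arises as $\mathrm{Prio}(N,O,\succ,\pi)$ for at least one permutation $\pi$.

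That missing direction is exactly the ``only if'' half of Fact~\ref{fact:AbSo}: if a discrete assignment is Pareto optimal, then it is the outcome of serial dictatorship for some permutation $\pi$. Hence each Pareto optimal deterministic assignment $q$ receives, in the RSD representation, a coefficient equal to $|\{\pi : \mathrm{Prio}(N,O,\succ,\pi)=q\}|/n! > 0$, which is nonzero. Combining this with the first paragraph, RSD is a convex combination of discrete assignments in which (a) every assignment appearing is Pareto optimal (by the ``if'' half of Fact~\ref{fact:AbSo}), and (b) every Pareto optimal deterministic assignment appears with strictly positive weight (by the ``only if'' half of Fact~\ref{fact:AbSo}); that is precisely the definition of a \emph{proper} convex combination of all Pareto optimal deterministic assignments.

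I do not anticipate a genuine obstacle here: the lemma is essentially a restatement of Fact~\ref{fact:AbSo} packaged in the ``proper convex combination'' language used later. The only point requiring a word of care is the bookkeeping of coefficients --- several permutations may induce the same serial-dictatorship outcome, so one should phrase the RSD weight of a deterministic assignment $q$ as the sum $\sum_{\pi:\,\mathrm{Prio}(N,O,\succ,\pi)=q} 1/n!$ rather than as a single $1/n!$ term, and then observe this sum is positive exactly when the set of such $\pi$ is nonempty, which Fact~\ref{fact:AbSo} guarantees for every Pareto optimal $q$. With that observation in place the proof is a two-line argument invoking Fact~\ref{fact:AbSo} in both directions.
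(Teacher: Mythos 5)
Your proof is correct and follows essentially the same route as the paper: view RSD as the uniform average of serial dictatorship outcomes over all $n!$ permutations and invoke Fact~\ref{fact:AbSo} in both directions to conclude that every assignment appearing is Pareto optimal and every Pareto optimal assignment appears with positive weight. Your extra bookkeeping about several permutations mapping to the same assignment is a harmless refinement of the same argument.
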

            \begin{proof}
                RSD can be viewed as applying serial dictatorship with respect to all the $n!$ permutations and then aggregating each of the outcomes weighted with the probability $1/n!$ By Fact~\ref{fact:AbSo}, each discrete Pareto optimal assignment is a result of serial dictatorship with respect to some permutation.  Hence the RSD assignment is a proper convex combination of all Pareto optimal deterministic assignments.
                \end{proof}
                
               By using Lemmas~\ref{lemma:convex-expost} and \ref{lemma:rsd-convex}, we can get the following statement. 
                
                \begin{theorem*}[\citet{Mane09a}]
For a given preference profile, the RSD assignment is not SD-efficient if and only if there exists an ex post assignment that is not SD-efficient.
                    \end{theorem*}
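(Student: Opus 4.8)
The plan is to obtain the theorem as an almost immediate corollary of \lemref{lemma:convex-expost} and \lemref{lemma:rsd-convex}, together with the observation that the RSD outcome is itself an ex post efficient assignment. So the proof splits into the two directions of the equivalence, and essentially no new machinery is needed beyond what has already been set up.

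For the ``only if'' direction, I would argue contrapositively-free: suppose the RSD assignment is not SD-efficient. By \lemref{lemma:rsd-convex}, the RSD outcome is a proper convex combination of Pareto optimal deterministic assignments; since for discrete assignments SD-efficiency coincides with Pareto optimality, this exhibits the RSD outcome as a probability distribution over SD-efficient discrete assignments, i.e.\ the RSD outcome is ex post efficient. Hence the RSD assignment is itself an ex post efficient assignment that is not SD-efficient, which is exactly the witness the right-hand side asks for.

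For the ``if'' direction, suppose some ex post efficient assignment is not SD-efficient. By \lemref{lemma:convex-expost}, any proper convex combination of all Pareto optimal deterministic assignments is then also not SD-efficient. By \lemref{lemma:rsd-convex}, the RSD outcome is one such proper convex combination, so the RSD assignment is not SD-efficient, completing the equivalence.

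I do not expect a genuine obstacle at this stage: the substantive work has already been isolated into \lemref{lemma:convex-expost}, whose proof transports a trading cycle of the bad ex post assignment to a trading cycle of the RSD assignment via \factref{fact:sdeff} and \lemref{lemma:convex-non-zero}. The only point deserving a line of care is the reading of ``ex post assignment'' in the statement as ``ex post \emph{efficient} assignment,'' and the verification that the RSD outcome qualifies, which is immediate from \factref{fact:AbSo}: every Pareto optimal discrete assignment is the outcome of serial dictatorship for some permutation, so the supports appearing with positive weight in RSD are exactly those of the Pareto optimal discrete assignments.
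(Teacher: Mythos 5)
Your proposal is correct and follows essentially the same route as the paper: the ``if'' direction is exactly the paper's combination of \lemref{lemma:convex-expost} and \lemref{lemma:rsd-convex}, and your ``only if'' direction is just the direct form of the paper's contrapositive argument, both resting on the observation that the RSD outcome is itself ex post efficient (via \lemref{lemma:rsd-convex} and \factref{fact:AbSo}).
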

                    \begin{proof}
                        First assume that for the given preference profile, every ex post assignment is SD-efficient. Since the RSD assignment is ex post efficient, it follows that the RSD assignment is SD-efficient. 
                        
                        We now assume that there exists an ex post assignment that is not SD-efficient (1).
                        By Lemma~\ref{lemma:rsd-convex}, the RSD assignment is a proper convex combination of all Pareto optimal deterministic assignments (2). By (1) and (2), it follows from Lemma~\ref{lemma:convex-expost}, that the RSD assignment is not SD-efficient.
                        \end{proof}

		\paragraph{Acknowledgments}
	Data61 is funded by the Australian Government through the Department of Communications and the Australian Research Council through the ICT Centre of Excellence Program. Thanks to Mark Wilson for pointing out that the characterization was already present in a proposition in the paper by \citet{Mane09a}.

\renewcommand{\bibfont}{\normalfont\small}

\begin{thebibliography}{12}
\expandafter\ifx\csname natexlab\endcsname\relax\def\natexlab#1{#1}\fi
\expandafter\ifx\csname url\endcsname\relax
  \def\url#1{\texttt{#1}}\fi
\expandafter\ifx\csname urlprefix\endcsname\relax\def\urlprefix{URL }\fi

\bibitem[{Abdulkadiro{\u{g}}lu and S{\"o}nmez(1998)}]{AbSo98a}
Abdulkadiro{\u{g}}lu, A., S{\"o}nmez, T., 1998. Random serial dictatorship and
  the core from random endowments in house allocation problems. Econometrica
  66~(3), 689--702.

\bibitem[{Abdulkadiro{\u{g}}lu and S{\"o}nmez(1999)}]{AbSo99a}
Abdulkadiro{\u{g}}lu, A., S{\"o}nmez, T., 1999. House allocation with existing
  tenants. Journal of Economic Theory 88~(2), 233--260.

\bibitem[{Abraham et~al.(2005)Abraham, Cechl{\'a}rov{\'a}, Manlove, and
  Mehlhorn}]{ACMM05a}
Abraham, D.~J., Cechl{\'a}rov{\'a}, K., Manlove, D., Mehlhorn, K., 2005. Pareto
  optimality in house allocation problems. In: Proceedings of the 16th
  International Symposium on Algorithms and Computation (ISAAC). Vol. 3341 of
  Lecture Notes in Computer Science (LNCS). pp. 1163--1175.

\bibitem[{Aziz et~al.(2013)Aziz, Brandt, and Brill}]{ABB13b}
Aziz, H., Brandt, F., Brill, M., 2013. The computational complexity of random
  serial dictatorship. Economics Letters 121~(3), 341--345.

\bibitem[{Aziz et~al.(2015)Aziz, Mackenzie, Xia, and Ye}]{AMXY15a}
Aziz, H., Mackenzie, S., Xia, L., Ye, C., 2015. Ex post efficiency of random
  assignments. In: Proceedings of the 14th International Conference on
  Autonomous Agents and Multi-Agent Systems (AAMAS). pp. 1639--1640.

\bibitem[{Bogomolnaia and Moulin(2001)}]{BoMo01a}
Bogomolnaia, A., Moulin, H., 2001. A new solution to the random assignment
  problem. Journal of Economic Theory 100~(2), 295--328.

\bibitem[{Brams and King(2005)}]{BrKi05a}
Brams, S.~J., King, D.~L., 2005. Efficient fair division: Help the worst off or
  avoid envy? Rationality and Society 17~(4), 387--421.

\bibitem[{G{\"a}rdenfors(1973)}]{Gard73b}
G{\"a}rdenfors, P., 1973. Assignment problem based on ordinal preferences.
  Management Science 20, 331--340.

\bibitem[{Lee and Sethuraman(2011)}]{LeSe11a}
Lee, T., Sethuraman, J., August 2011. Equivalence results in the allocation of
  indivisible objects: a unified view, working paper.

\bibitem[{Manea(2009)}]{Mane09a}
Manea, M., 2009. Asymptotic ordinal inefficiency of random serial dictatorship.
  Theoretical Economics 4~(2), 165--197.

\bibitem[{Svensson(1994)}]{Sven94a}
Svensson, L.-G., 1994. Queue allocation of indivisible goods. Social Choice and
  Welfare 11, 323--330.

\bibitem[{Svensson(1999)}]{Sven99a}
Svensson, L.-G., 1999. Strategy-proof allocation of indivisible goods. Social
  Choice and Welfare 16~(4), 557--567.

\end{thebibliography}

\end{document}